\documentclass[letterpaper, 10 pt, conference]{ieeeconf}

\pdfminorversion=4 
\usepackage{booktabs}
\usepackage{fontawesome}
\usepackage{comment}
\usepackage{array}
\usepackage{cite}
\usepackage{amssymb}
\usepackage{graphicx} 
\newcommand{\norm}[1]{\left\lVert#1\right\rVert}
\usepackage{epstopdf}
\usepackage{float}
\usepackage{tikz}
\usepackage{algpseudocode}
\usepackage{algorithm}
\usepackage{balance}
\usepackage[fancythm,fancybb]{jphmacros2e} 
\usepackage{amsfonts}
\usepackage{mathrsfs}
\usepackage{amsmath}
\usepackage{amsthm}
\usepackage{comment}
\usepackage{mathtools}
\usepackage{siunitx}
\usepackage{tikz}
 \usepackage[norefs,nocites]{refcheck}
\usetikzlibrary{shapes,arrows}

\usepackage{enumitem}
\usepackage{dsfont}

\usepackage{tabularx}
\usepackage{collcell}

\usetikzlibrary{fit}

\tikzstyle{block} = [draw, fill=blue!20, rectangle, 
	minimum height=3em, minimum width=3em]
	\tikzstyle{joint} = [draw, fill=black, circle, 
	inner sep=0pt, minimum size=2pt]
	\tikzstyle{sum} = [draw, fill=blue!20, circle, node distance=1cm]
	\tikzstyle{input} = [coordinate]
	\tikzstyle{output} = [coordinate]
	\tikzstyle{pinstyle} = [pin edge={to-,thin,black}]

\begin{document}
\IEEEoverridecommandlockouts
	\allowdisplaybreaks

\title{Coordinated UAV Beamforming and Control\\ for Directional Jamming and Nulling}
\author{Filippos Fotiadis$^1$ \qquad\qquad Brian M. Sadler$^1$ \qquad\qquad Ufuk Topcu$^1$
\thanks{$^1$F. Fotiadis, B. M. Sadler and U. Topcu are with the Oden Institute for Computational Engineering and Sciences, The University of Texas at Austin, Austin, TX, 78712, 
USA. Email:
{\tt\small ffotiadis@utexas.edu, brian.sadler@austin.utexas.edu, utopcu@utexas.edu}.}
}
\maketitle

\begin{abstract}
 Efficient mobile jamming against eavesdroppers in wireless networks necessitates accurate coordination between mobility and antenna beamforming.
We study the coordinated beamforming and control problem for a UAV that carries two omnidirectional antennas, and which uses them to jam an eavesdropper while leaving a friendly client unaffected. 
The UAV can shape its jamming beampattern by controlling its position, its antennas' orientation, and the relative phasing for each antenna. We derive a closed-form expression for the antennas' phases that guarantees zero jamming impact on the client. In addition, we determine the antennas’ orientation and the UAV’s position that maximizes jamming impact on the eavesdropper through an optimal control problem, optimizing the orientation pointwise and the position through the UAV’s control input. Simulations show how this coordinated beamforming and control scheme enables directional GPS denial while guaranteeing zero interference towards a friendly direction.
\end{abstract}
\begin{keywords}
Optimal control, beamforming, directional jamming, directional nulling.
\end{keywords}

\section{Introduction}

Jamming is a widely used technique for both enhancing privacy and performing adversarial attacks in wireless communications and control. It takes place when electromagnetic interference arrives at an antenna, subsequently obstructing or distorting the reception of other communication signals. In some operational settings, agents have employed friendly jamming as a means to defend their assets from hostile drones \cite{ukraine_mirage_ew_2025, dronesex, droneex2}.
Agents also often use jamming maliciously, for example, to interfere with the reception of GNSS signals \cite{insidegnss_mscantonia2024, hump, gpsjamex}. This dual-use nature has motivated the design of control strategies that can operate effectively in the presence of, or in coordination with, jamming.

Two factors largely determine the effectiveness of a jamming signal: the distance to the target and the shape of the antenna’s radiation pattern.
Because signal strength decays with distance, staying close to the target ensures sufficient interference power. At the same time, the antenna’s beampattern, i.e., the way its transmitted energy is distributed over different directions, dictates where that power is delivered. A well-shaped beam can focus interference on an eavesdropper at a known direction, whereas a poorly shaped one may also radiate energy toward friendly receivers. Figure~\ref{fig:beams_example} illustrates this fact, showing how beamshaping can direct interference selectively at an adversary while avoiding spread towards unintended directions.

Beamshaping for directional jamming has been studied extensively in wireless communication and antenna theory; however, its combination with dynamics is relatively limited. For example, \cite{brian2uav} examined beampattern design for static jamming scenarios, while related studies \cite{brian2, wang2016artificial, zhu2014joint, yagiz, kanel} explored different aspects of antenna configuration and beamshaping, all without considering the case where the antenna carrier is a mobile UAV. A related strand incorporates UAV mobility by optimizing trajectories \cite{zhang2017secrecy, li2018uav, zhong2018coop, 9525433}, but typically in slot-based formulations that ignore vehicle dynamics and actuator limits, and can generate beampatterns that radiate interference toward friendly users. These approaches highlight the value of beamshaping and UAV mobility, but leave open the question of how to integrate beamforming with dynamics, control, and nulling constraints in a unified framework.

\begin{figure}[!t]
\centering
\includegraphics[width=1\linewidth]{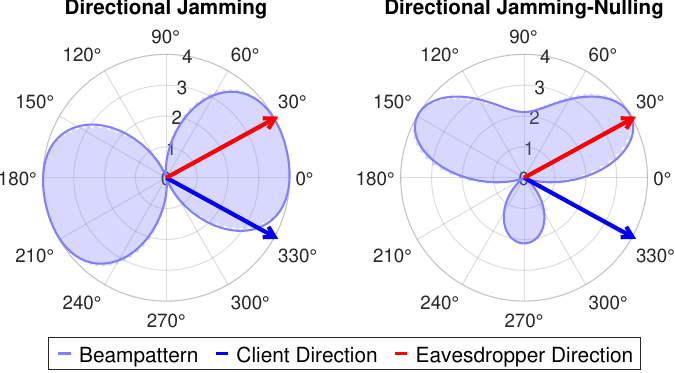}
\caption{\small Examples of antenna beampatterns. Left: the beampattern directs maximum jamming power toward the eavesdropper, but interference also leaks toward the client. Right: the beampattern directs maximum jamming power to the eavesdropper while maintaining a null in the client’s direction.} \label{fig:beams_example}
\end{figure}

Research in control systems, in contrast, has emphasized the role of mobility in performing or defending against jamming attacks; however, the underlying communication model is often simplified and separated from mobility. For example, \cite{bhattacharya2010game, bhattacharya2011spatial, kan2016coverage, valianti2021multi, valianti2024cooperative} presented strategies for planning or agent placement to perform jamming, but assume disk models\footnote{The deterministic disk model assumes that jamming is effective within a fixed radius from the receiver, and is otherwise not effective.}, single omnidirectional antennas, or fixed beampatterns. These provide useful guidelines for the geometry of jammer effectiveness, but restrict capabilities to cases where the target is far from friendly clients or in line of sight; otherwise, jamming the target also jams the friendly client. Other studies have modeled jamming at the control system level \cite{etesami2019dynamic, li2015jamming, lee2015jamming, zhang2016attack, surabhi, hideaki, teixeira, 8693810}, but do not account for the communication model that relates spatial geometry to jamming impact. This gap motivates our study of coordinated beamforming and control for directional jamming and nulling.

We consider a unified framework that integrates antenna beamforming with dynamics to achieve directional jamming with guaranteed nulls toward friendly clients. Specifically, we study a jammer UAV that carries two omnidirectional antennas, and which controls their phases and rotational orientation, as well as its position to i) ensure zero jamming power reaches a friendly client, and to ii)  maximize jamming power radiated towards an eavesdropper. We assume the antenna spacing is fixed and within one-half of the carrier wavelength, often referred to as a two-element uniform linear array. This configuration allows smooth beampatterns that remain robust to variations in the eavesdropper’s direction (Figure \ref{fig:beams_example}).

To achieve the first requirement, we compute a closed-form expression for the antennas' phases so that their signals cancel out in the client's direction. This forces the antennas' beampattern to take a shape similar to the right part of Figure 1. To achieve the second requirement, we formulate an optimal control problem wherein the objective is for the UAV to relocate to get in a better jamming position while conforming to actuation constraints. For this problem, we compute the optimal antenna orientation pointwise in closed form, whereas we solve for the optimal control input of the UAV using Pontryagin's principle. Our results demonstrate how integrating control and communication design can significantly enhance the efficacy of jamming operations.

\textit{Notation:} $\mathds{1}_{x\in A}(x)$ is the indicator function that is equal to $1$ when $x\in A$, and $0$ otherwise. $\norm{\cdot}$ denotes the Euclidean norm, and $\norm{\cdot}_{\infty}$ the infinity norm. The function $\textrm{atan2}(y, x):\mathbb{R}\times\mathbb{R}\rightarrow [0, 2\pi)$ denotes the two-argument arctangent. We use $I$ to denote an identity matrix of appropriate order, and $\textrm{diag}(c_1,\ldots,c_n)$ to denote a diagonal matrix with entries $c_1,\ldots,c_n$. For a vector $x$, $x_i$ denotes its $i$-th entry. 

\section{Problem Formulation}

We consider a spatial setting where a client wants to receive a wireless communication signal with carrier wavelength $\lambda > 0$, while an eavesdropper aims to intercept the same signal. We denote the position of the client in this setting as $p_c=[x_c~y_c]^\textrm{T}\in\mathbb{R}^2$, and the position of the eavesdropper as $p_e=[x_e~y_e]^\textrm{T}\in\mathbb{R}^2$. 

\subsection{Spatial Setup for Coordinated Jamming}

To prevent the eavesdropper from intercepting the communication, a UAV carries two omnidirectional antennas in order to transmit electromagnetic interference, i.e., to jam the eavesdropper. However, for the jamming to be effective, the UAV must appropriately control the antenna phases of its interference signals as well as its position. Specifically, it must control them to ensure that the interference i) arrives as powerfully as possible at the eavesdropper, hence maximizing jamming impact; and ii) cancels out at the client, hence leaving the client completely unaffected. 

We set the distance between the two antennas to be fixed and equal to $0<D\le \frac{\lambda}{2}$, so that they form a basic two-element antenna array.  Given this, if we denote the position of the UAV as $p_g=[x_g~y_g]^\textrm{T}\in\mathbb{R}^2$, then we can express the positions  $p_1, p_2\in\mathbb{R}^2$ of each of the antennas as
\begin{align*}
p_1=p_g-\frac{D}{2}(\textrm{cos}(\theta_g),\textrm{sin}(\theta_g)),\\
p_2=p_g+\frac{D}{2}(\textrm{cos}(\theta_g),\textrm{sin}(\theta_g)),
\end{align*}
where $\theta_g$ denotes the orientation of the antenna array axis relative to the reference frame at $p_g$, which we assume the UAV can mechanically adjust. We provide a geometric illustration of this setup in Figure \ref{fig:setup}. 

\begin{remark}
Instead of employing a UAV carrying two omnidirectional antennas, one could also employ two UAVs carrying one antenna each. However, this case is more challenging for precise control, antenna array calibration, and beamshape design.
\end{remark}

\begin{figure}
\centering
\includegraphics[scale=1.8]{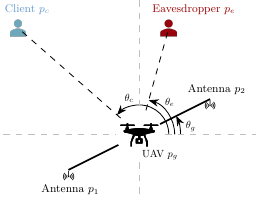}
\caption{\small A UAV positioned at $p_g=[x_g~y_g]^\textrm{T}$ carries two omnidirectional antennas. The objective is to control its position $p_g$, the angle of the two antennas $\theta_g$, as well as the antennas' phases in order to maximally jam the eavesdropper positioned at $p_e=[x_e~y_e]^\textrm{T}$. At the same time, a client positioned at $p_c=[x_c~y_c]^\textrm{T}$ should remain unaffected.} \label{fig:setup}
\end{figure}

\subsection{Jamming Beampatterns and Power}

The jamming beampattern generated by the phase selection for the two antennas should ideally cause the two transmitted signals to combine constructively at the eavesdropper $p_e$ to maximize jamming effectiveness, and destructively at the client $p_c$ to leave it unaffected. At any point $p\in\mathbb{R}^2$, this beampattern  takes the expression
\begin{equation}\label{eq:beam}
B(p; ~p_g, \theta_g, \phi_1, \phi_2) = \left|e^{j(kd_1(p)+\phi_1)}{+}e^{j(kd_2(p)+\phi_2)}\right|^2,
\end{equation}
where $\phi_1,~\phi_2\in\mathbb{R}$ are the beamsteering phases applied at the respective antennas, $d_1(p)=\norm{p-p_1},~d_2(p)=\norm{p-p_2}$ are the distances of each of the antennas from $p$, and $k=\frac{2\pi}{\lambda}$. 

To calculate jamming power, note that the beampattern $B$ can be interpreted as the directional gain of a virtual two-element antenna array with a geometric center at $p_g$. For this antenna configuration, the free-space path losses $L:\mathbb{R}\rightarrow\mathbb{R}$ at any point $p$ are
\begin{equation}\label{eq:FSPL}
L(\norm{p-p_g})=\frac{1}{4k^2\norm{p-p_g}^2},
\end{equation}
where $\norm{p-p_g}$ is the distance of the point $p$ from the antenna array's geometric center (i.e., from the UAV).
Hence, following the Friis transmission equation \cite{balanis2005antenna}, we can calculate the total jamming power arriving at any point $p$, measured in $\textrm{dBm}$, as
\begin{align}\nonumber
P(p)&=P(p; ~p_g, \theta_g, \phi_1, \phi_2)\\\nonumber=&10\textrm{log}_{10}\Big(P_0B(p; ~p_g, \theta_g, \phi_1, \phi_2) L(\norm{p-p_g})\Big) \\\nonumber =& 10\textrm{log}_{10}P_0 + 10\textrm{log}_{10}B(p; ~p_g, \theta_g, \phi_1, \phi_2) \\&+ 10\textrm{log}_{10}L(\norm{p-p_g}),  \label{eq:powerbefore}
\end{align}
where $P_0>0$ is the nominal power of each antenna in $\textrm{mW}$, and where we have assumed unit antenna gains.

\subsection{Coordinated Beamforming and Control for Maximal Jamming}

In the case of an immobile UAV, the jamming power at the eavesdropper's location $p_e$ (as given by \eqref{eq:powerbefore}) can be maximized using standard optimization tools by statically selecting the phases $\phi_1$, $\phi_2$, and the angle $\theta_g$. Here we consider a \textit{mobile} UAV, where jamming power can be further enhanced by dynamically controlling the UAV position $p_g = p_g(t)$ over time $t \ge 0$. In this setting, the jamming power \eqref{eq:powerbefore} becomes time-varying:
\begin{align}\nonumber
P(p;&~t)=P(p; ~p_g(t), \theta_g(t), \phi_1(t), \phi_2(t))  \\\nonumber =& 10\textrm{log}_{10}P_0 + 10\textrm{log}_{10}B(p; ~p_g(t), \theta_g(t), \phi_1(t), \phi_2(t)) \\ & + 10\textrm{log}_{10}L(\norm{p-p_g(t)}),  \label{eq:power}
\end{align}
and so do the values of the phases $\phi_1(t), \phi_2(t)$ and the angle $\theta_g(t)$, allowing for greater jamming flexibility. Controlling the UAV position $p_g(t)$ while concurrently selecting the antennas' transmission parameters $\theta_g(t), \phi_1(t), \phi_2(t)$ is the purpose of the remainder of the paper.

Towards this end, for all $t\ge0$, we consider the trajectory dynamics
\begin{align*}
\dot{p}_g(t)&=v_g(t),~~p_g(0)=p_{g0},\\
\dot{v}_g(t)&=u_g(t),~~v_g(0)=v_{g0},
\end{align*}
where $p_g(t),~v_g(t)\in\mathbb{R}^2$ are the UAV's position and velocity -- which coincide with the position and velocity of the center of the two-element antenna array -- and $u_g(t)\in\mathbb{R}^2$ is the UAV control input. 

Per the discussion above, maximizing the jamming power \eqref{eq:power} at the eavesdropper location $p_e$ requires carefully controlling the UAV position $p_g(t)$, the antenna orientation $\theta_g(t)$, and the phases $\phi_1(t), \phi_2(t)$ over time $t \ge 0$. Simultaneously, these parameters must be constrained so that the beampattern \eqref{eq:beam} (and thus the jamming power \eqref{eq:power}) is zero at the client location $p_c$, ensuring the client remains unaffected. To achieve this, we can solve the optimal control problem of minimizing
\begin{align}\nonumber
& J{=} \int_0^{t_f} \hspace{-1mm}\Big(\frac{1}{2}u_g^\textrm{T}(t)Ru_g(t){+}\frac{1}{2}v_g^\textrm{T}(t)Q_rv_g(t){-}a_r\sigma(P(p_e;t)) \Big)\mathrm{d}t \\\label{eq:cost}&\qquad\qquad\quad+\frac{1}{2}v_g^\textrm{T}(t_f)Q_fv_g(t_f)-a_f\sigma(P(p_e;~t_f)),\\
&\quad \textrm{s.t.}\quad B(p_c; ~t)=0, ~\forall t\in[0,t_f],\nonumber\\
&\quad\qquad \norm{u_g(t)}_{\infty}\le \bar{u}, ~\forall t\in[0,t_f],\nonumber
\end{align}
where $\bar{u} > 0$ denotes the maximum actuation capability of the UAV. Here, $a_r, a_f > 0$ are scalars that emphasize the importance of jamming impact; $R=\textrm{diag}(r_1,r_2) \succ 0$ is a weighting matrix that penalizes control effort; and $Q_r,~ Q_f \succ 0$ are weighting matrices that penalize large velocities. Finally, $\sigma:\mathbb{R} \rightarrow \mathbb{R}$ is a function that, in the nominal case of directly maximizing jamming impact, is simply the identity function $\sigma(x) = x$. However, if one wants to incentivize denial of service, $\sigma$ should be chosen as an activation function centered at the threshold for service denial. For example, in the case of denying GPS signals, $\sigma$ can be chosen as a ReLU-like function centered at $-90~\textrm{dBm}$.

\begin{remark}\label{re:2}
While the free-space path losses term $L$ indicate that smaller distances lead to higher jamming impact, it is evident from \eqref{eq:power} that such impact also depends on the beamforming function $B$. This means that the array formed by the UAV's antennas should reposition not only to get closer to the eavesdropper, but also arrive from the proper direction and with the appropriate orientation $\theta_g$. For this reason,  the position $p_g$ of the UAV is not included explicitly in \eqref{eq:cost} but, instead, its desired values are informed through the jamming impact power $P$.
\end{remark}

\begin{remark}
The formula \eqref{eq:FSPL} is valid only in the far-field, i.e., when the distance $d$ from the target is greater than $\frac{2D^2}{\lambda}$, which corresponds to a few centimeters. In practice, however, an antenna radiating a few centimeters near the eavesdropper would induce overwhelming jamming power anyway.
\end{remark}

\section{Beamforming and Control for Constrained Maximal Jamming}

Given that the cost \eqref{eq:cost} is nonlinear and subject to a hard constraint, directly minimizing it over the decision variables $u_g,~\theta_g,~\phi_1,~\phi_2$ is computationally expensive and presents a highly nonconvex landscape. To address this, we adopt a bilevel optimization approach: we first select the phases $\phi_1,~\phi_2$ to enforce the nulling constraint $B(p_c; t)=0$ at the client, and then optimize $u_g,~\theta_g$ to minimize the cost \eqref{eq:cost}.

\subsection{Phase Control for Nulling at the Client}

First, we choose the phases $\phi_1,~\phi_2$ of the jamming signals of the two antennas to enforce nulling at the client. To that end, note that the constraint $B(p_c;~ t)=0$ in \eqref{eq:cost} implies
\begin{align*}
B(p_c;~t)=0 \Longleftrightarrow &\left|e^{j(kd_1(p_c)+\phi_1)}+e^{j(kd_2(p_c)+\phi_2)}\right|^2=0 \\ \Longleftrightarrow &\left|e^{j(k(d_1(p_c)-d_2(p_c))+\phi_1-\phi_2)}+1\right|^2=0.
\end{align*}
Therefore, choosing
\begin{equation}\label{eq:phi2}
\phi_2(t)=\phi_1(t)+\pi+k(d_1(p_c)-d_2(p_c))
\end{equation}
guarantees the client remains unaffected by the jamming. Subsequently, we may choose the phase $\phi_1(t)$ arbitrarily: since it represents just an offset on the phase of the beamformed signal on the eavesdropper, it has no effect on the magnitude of the beampattern \eqref{eq:beam}. However, note that for the narrow band signal case considered here, we can adjust $\phi_1(t)$ to offset for Doppler shift $f_D$ on the jammer according to 
\begin{equation}\label{eq:phi1}
\phi_1(t)=-\int_0^t 2\pi f_D(t)\textrm{d}t{=}-\int_0^t kv_g^\textrm{T}(t)\frac{p_e-p_g(t)}{\norm{p_e-p_g(t)}}\textrm{d}t.
\end{equation}

\subsection{Orientation Control for Maximal Jamming Impact on the Eavesdropper}

Next, we design the trajectory of the antenna array's orientation, denoted by $\theta_g(t)$. This orientation influences only the jamming power $P$ in the cost function \eqref{eq:cost}, and it does so indirectly through the beamforming function $B$ defined in \eqref{eq:beam}.
We can therefore adjust the orientation $\theta_g(t)$ at each time instant $t\ge0$ to pointwise maximize \eqref{eq:beam}, and this will lead to minimization of \eqref{eq:cost}. 

To this end,  plugging the choice \eqref{eq:phi2} for the phases $\phi_1, \phi_2$ in the beampattern function \eqref{eq:beam}, we obtain
\begin{equation*}
B(p; ~p_g, \theta_g)=\left|e^{j(k(d_1(p)-d_1(p_c)-d_2(p)+d_2(p_c)))}-1\right|^2
\end{equation*}
or equivalently
\begin{multline}\label{eq:tempb1}
B(p; ~p_g, \theta_g)=2-2\textrm{cos}\Big(k(d_1(p)-d_1(p_c)\\-d_2(p)+d_2(p_c))\Big).
\end{multline}
By definition, we have $d_i(p)=\norm{p-p_i}=\norm{p-p_g+(-1)^{i+1}\frac{D}{2}(\textrm{cos}(\theta_g),\textrm{sin}(\theta_g))}$ for the distance from antenna $i\in\{1,2\}$. In addition, in the far field we have $\norm{p-p_g}\gg\frac{2D^2}{\lambda}$, and hence $d_i(p)\approx \norm{p-p_g}+(-1)^{i+1}\frac{D}{2}(\textrm{cos}(\theta_g), \textrm{sin}(\theta_g))\frac{p-p_g}{\norm{p-p_g}}$ by a first-order approximation in the law of cosines. Denoting $p-p_g=R(\textrm{cos}(\theta),\textrm{sin}(\theta))$, where $(R,\theta)$ are the polar coordinates of $p$ with respect to the reference frame centered at $p_g$ (Figure \ref{fig:setup}), this implies 
\begin{align*}
d_1(p)&\approx \norm{p-p_g}+\frac{D}{2}\textrm{cos}(\theta_g-\theta),\\
d_2(p)&\approx \norm{p-p_g}-\frac{D}{2}\textrm{cos}(\theta_g-\theta),
\end{align*}
and hence we can approximate \eqref{eq:tempb1} on the position $p=p_e$ of the eavesdropper as a function of the \textit{angular direction} $\theta_e$ of the eavesdropper (Figure \ref{fig:setup}), according to
\begin{align}\nonumber
B(\theta_e; ~p_g, \theta_g) &
\approx2{-}2\textrm{cos}(kD(\textrm{cos}(\theta_g-\theta_e)-\textrm{cos}(\theta_g-\theta_c)))\\&=2-2\textrm{cos}\left(2kD\mu~\textrm{sin}\left(\frac{\theta_e{+}\theta_c}{2}{-}\theta_g\right)\right)\hspace{-1mm}\label{eq:Btheta}
\end{align}
with 
\begin{equation*}
\mu:=\textrm{sin}\left(\frac{\theta_c-\theta_e}{2}\right).
\end{equation*}
Note that this function is purely angle-dependent, depending on the direction of the eavesdropper $\theta_e$, rather than its position $p_e$, which is the expected behavior in the far field for antenna beampatterns. 

Using the structure of the beampattern and generalizing the analysis of \cite{brian2uav}, we calculate the value of $\theta_g$ that maximizes \eqref{eq:Btheta} as follows. 

\begin{theorem}
The antenna orientation
\begin{equation}\label{eq:thetag}
\theta_g(t)=\begin{cases}\frac{\theta_c(t)+\theta_e(t)}{2}\pm\frac{\pi}{2}, ~&|\mu(t)|<\frac{\pi}{2kD}, \\ \frac{\theta_c(t)+\theta_e(t)}{2}\pm\mathrm{arcsin}\left(\frac{\pi}{2kD|\mu(t)|}\right),~ &|\mu(t)|\ge\frac{\pi}{2kD},\end{cases}
\end{equation}
is a global maximizer of the beampattern \eqref{eq:Btheta} at the direction $\theta_e$ of the eavesdropper.
\end{theorem}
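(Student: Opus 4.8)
The plan is to reduce the pointwise maximization of \eqref{eq:Btheta} over $\theta_g$ to a one-dimensional trigonometric optimization and then split into two regimes. First I would fix $t$, write $\alpha := 2kD\mu$ (so $|\alpha| = 2kD|\mu|$), and substitute $\psi := \tfrac{\theta_e+\theta_c}{2} - \theta_g$; since $\theta_g \mapsto \psi$ is a bijection of $\mathbb{R}$ and $B(\theta_e;~p_g,\theta_g) = g(\psi)$ with $g(\psi) := 2 - 2\cos(\alpha\sin\psi)$, maximizing \eqref{eq:Btheta} over $\theta_g$ is equivalent to maximizing $g$ over $\psi\in\mathbb{R}$. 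Under this substitution the candidate orientation \eqref{eq:thetag} corresponds to $\psi = \mp\tfrac{\pi}{2}$ in the first case and to $\psi = \mp\arcsin\!\big(\tfrac{\pi}{2kD|\mu|}\big)$ in the second, so it suffices to show that each of these values of $\psi$ attains $\max_\psi g(\psi)$. The trivial bound $g(\psi)\le 4$ always holds, with equality precisely when $\alpha\sin\psi$ is an odd multiple of $\pi$; since $|\sin\psi|\le 1$, such a $\psi$ exists iff $|\alpha|\ge\pi$, i.e., iff $|\mu|\ge\tfrac{\pi}{2kD}$.

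In the regime $|\mu|\ge\tfrac{\pi}{2kD}$ I would simply verify attainment: for $\psi = \mp\arcsin\!\big(\tfrac{\pi}{|\alpha|}\big)$, which is well defined because $\tfrac{\pi}{|\alpha|}\in(0,1]$, one gets $\alpha\sin\psi = \mp\operatorname{sgn}(\alpha)\,\pi = \pm\pi$, an odd multiple of $\pi$, hence $g(\psi) = 4 = \max_\psi g$. Translating back, $\theta_g = \tfrac{\theta_c+\theta_e}{2}\pm\arcsin\!\big(\tfrac{\pi}{2kD|\mu|}\big)$ is a global maximizer of \eqref{eq:Btheta}, as claimed.

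In the complementary regime $|\mu|<\tfrac{\pi}{2kD}$ we have $|\alpha\sin\psi|\le|\alpha|<\pi$ for all $\psi$, so $g$ never reaches $4$; instead, using that $t\mapsto 2-2\cos t$ is even and strictly increasing on $[0,\pi]$, $g(\psi) = 2 - 2\cos\!\big(|\alpha|\,|\sin\psi|\big)$ is maximized exactly when $|\sin\psi| = 1$, i.e., when $\psi\in\tfrac{\pi}{2}+\pi\mathbb{Z}$. The candidate $\psi=\mp\tfrac{\pi}{2}$ lies in this set, so $\theta_g = \tfrac{\theta_c+\theta_e}{2}\pm\tfrac{\pi}{2}$ achieves the regime's maximum $2-2\cos(2kD|\mu|)$ of \eqref{eq:Btheta}. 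The degenerate subcase $\mu=0$ (equivalently $\theta_e\equiv\theta_c$) makes \eqref{eq:Btheta} identically zero, so any $\theta_g$, in particular the stated one, is trivially a maximizer; note $0<\tfrac{\pi}{2kD}$, so this is consistent with the first branch of \eqref{eq:thetag}.

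I do not expect a deep obstacle: the entire argument is elementary calculus on $g(\psi)$ together with the bound $|\sin\psi|\le1$. The only real care is bookkeeping — matching the $\pm$ sign and the absolute value inside the $\arcsin$ to the sign of $\mu$ and to the two sign choices of $\psi$, and checking that the two branches of \eqref{eq:thetag} agree continuously at the threshold $|\mu| = \tfrac{\pi}{2kD}$, where $\arcsin(1) = \tfrac{\pi}{2}$ so both branches coincide. I would also state explicitly that \eqref{eq:Btheta} is the far-field, first-order-approximated beampattern, so the optimality claim is understood with respect to that model, whose derivation precedes the theorem.
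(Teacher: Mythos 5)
Your proposal is correct and follows essentially the same route as the paper's proof: a case split at $|\mu|=\frac{\pi}{2kD}$, making the cosine argument reach $\pm\pi$ exactly when possible and otherwise pushing $|\sin(\cdot)|$ to $1$ via monotonicity of $2-2\cos$ on $[0,\pi]$. Your version is merely a bit more careful with the sign/absolute-value bookkeeping in the $\arcsin$ branch and with the degenerate case $\mu=0$, which the paper glosses over.
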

\begin{proof}
First, suppose that $|\mu(t)|<\frac{\pi}{2kD}$. Then, the range of $2kD\mu~\textrm{sin}\left(\frac{\theta_e{+}\theta_c}{2}{-}\theta_g\right)$ is $[-2kD\mu, 2kD\mu]\subset[-\pi, \pi]$. In this range, the maximization of $-2\textrm{cos}\left(2kD\mu~\textrm{sin}\left(\frac{\theta_e{+}\theta_c}{2}{-}\theta_g\right)\right)$ is attained by making the cosine argument as close to $\pm\pi$ as possible, which is equivalent to making the enclosed sin argument equal to $\pm\frac{\pi}{2}$. Therefore, the global maximizer of \eqref{eq:Btheta} with respect to $\theta_g$ satisfies
\begin{equation}\label{eq:thg1}
\frac{\theta_e{+}\theta_c}{2}-\theta_g=\pm\frac{\pi}{2} \Longrightarrow \theta_g = \frac{\theta_e{+}\theta_c}{2}\pm\frac{\pi}{2}.
\end{equation}

Next, suppose $|\mu(t)|\ge\frac{\pi}{2kD}$. Then, \eqref{eq:Btheta} is maximized by 
enforcing $-2\textrm{cos}\left(2kD\mu~\textrm{sin}\left(\frac{\theta_e{+}\theta_c}{2}{-}\theta_g\right)\right)=2$, which is attained by making the cosine argument exactly equal to $\pm \pi$. This yields
\begin{multline}\label{eq:thg2}
2kD\mu~\textrm{sin}\left(\frac{\theta_e+\theta_c}{2}-\theta_g\right)=\pm \pi \\\Longrightarrow \frac{\theta_e+\theta_c}{2}-\theta_g = \pm\mathrm{arcsin}\left(\frac{\pi}{2kD\mu} \right) \\ \Longrightarrow \theta_g = \frac{\theta_e+\theta_c}{2} \pm\mathrm{arcsin}\left(\frac{\pi}{2kD\mu}\right).
\end{multline}
Combining \eqref{eq:thg1}-\eqref{eq:thg2} yields \eqref{eq:thetag}. \frQED
\end{proof}

\subsection{Position Control for Maximal Jamming Impact on the Eavesdropper}

Having chosen the jamming phases $\phi_1(t),~\phi_2(t)$ and the array orientation $\theta_g(t)$, the final step is to select the control input $u_g(t)$ that repositions the UAV for maximal jamming impact. To this end, plugging the expression of the optimal antenna orientation \eqref{eq:thetag} in \eqref{eq:Btheta}, we obtain the optimal beampattern value:
\begin{equation*}
B^\star(\theta_e;~p_g) {=} \begin{cases}2-2\textrm{cos}\left(2kD\textrm{sin}\left(\frac{\theta_c{-}\theta_e}{2}\right)\right), &|\mu(t)|<\frac{\pi}{2kD}, \\ 4, &|\mu(t)|\ge\frac{\pi}{2kD}.\end{cases}
\end{equation*}
For this beampattern, the jamming power \eqref{eq:power} at the eavesdropper's position $p_e$ takes the form
\begin{align}\nonumber
P^\star(p_e;~t)=&10\textrm{log}_{10}P_0 {+} 10\textrm{log}_{10}B^\star(\theta_e(t);~p_g(t)) \\ & + 10\textrm{log}_{10}L(\norm{p_e-p_g(t)}).\label{eq:powerstar}
\end{align}
Hence, given also that the phase design \eqref{eq:phi2} enforces the nulling constraint at the client in \eqref{eq:cost}, the cost \eqref{eq:cost} becomes
\begin{align}\nonumber
 J^\star&{=} \int_0^{t_f}\hspace{-2mm} \Big(\frac{1}{2}u_g^\textrm{T}(t)Ru_g(t){+}\frac{1}{2}v_g^\textrm{T}(t)Q_rv_g(t){-}a_r\sigma(P^\star(p_e;t)) \Big)\mathrm{d}t\\ &\qquad\quad+\frac{1}{2}v_g^\textrm{T}(t_f)Q_fv_g(t_f)-a_f\sigma(P^\star(p_e;t_f)),\label{eq:coststar}\\
 &\quad \textrm{s.t.}\quad \norm{u_g(t)}_{\infty}\le \bar{u}, ~\forall t\in[0,t_f].\nonumber
\end{align}
Given this expression of the cost, we derive its minimizing controller using Pontryagin's principle.
\begin{theorem}\label{th:bvp}
Let $u_g^\star :[0,t_f]\rightarrow\mathbb{R}^2$ be a minimizer of \eqref{eq:coststar}. Then, for $i\in\{1,2\}$:
\begin{equation}\label{eq:opt}
u_{gi}^\star(t)=\begin{cases}-r_i^{-1}\xi_{vi}(t),~ &|r_i^{-1}\xi_{vi}(t)|\le \bar{u}, \\ -\bar{u}\cdot \mathrm{sgn}(\xi_{vi}(t)),~&|r_i^{-1}\xi_{vi}(t)|>\bar{u}, \end{cases}
\end{equation}
where $\xi_p,\xi_v:[0,t_f]\rightarrow\mathbb{R}^2$ solve
\begin{equation*}
\begin{split}
\dot{\xi}_p(t) &= a_r\gamma(t)\Bigg(\frac{B_p(p_g(t))\cdot\mathds{1}_{|\mu(t)|<\frac{\pi}{2kD}}}{B^\star(\theta_e(t);~p_g(t))}{+}\frac{L_p(p_g(t))}{L(\norm{p{-}p_g(t)})}\Bigg){,}\\
\dot{\xi}_v(t) &=-\xi_p(t)- Q_rv_g(t),
\end{split}
\end{equation*}
 with boundary conditions 
\begin{equation*}
\begin{split}
{\xi}_p(t_f) &= -a_f\gamma(t_f)\Bigg(\frac{B_p(p_g(t_f))\cdot\mathds{1}_{|\mu(t_f)|<\frac{\pi}{2kD}}}{B^\star(\theta_e(t_f);p_g(t_f))}\\&\qquad\qquad\qquad\quad{+}\frac{L_p(p_g(t_f))}{L(\norm{p-p_g(t_f)})}\Bigg), \\ {\xi}_v(t_f) &=Q_f v_g(t_f).
\end{split}
\end{equation*}
Here, we have the formulas $\gamma(t)=\frac{10 \sigma'(P^\star(p_e;t))}{\mathrm{ln}10}$ and
\begin{align*}
L_p(p_g(t)) &= \frac{p_e-p_g(t)}{2k^2\norm{p_e-p_g(t)}^4},\\
B_p(p_g(t)) &= 2kD~\mathrm{sin}\left(2kD\mu(t)\right)\mathrm{cos}\left(\frac{\theta_c(t)-\theta_e(t)}{2}\right) \\ &\hspace{-10mm}\cdot \begin{bmatrix} \frac{y_c-y_g(t)}{(x_c-x_g(t))^2+(y_c-y_g(t))^2}-\frac{y_e-y_g(t)}{(x_e-x_g(t))^2+(y_e-y_g(t))^2} \\ -\frac{x_c-x_g(t)}{(x_c-x_g(t))^2+(y_c-y_g(t))^2}+\frac{x_e-x_g(t)}{(x_e-x_g(t))^2+(y_e-y_g(t))^2} \end{bmatrix}.
\end{align*}
\end{theorem}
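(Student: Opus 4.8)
The plan is to apply Pontryagin's minimum principle to the fixed-horizon optimal control problem \eqref{eq:coststar}, whose only dynamics are the double-integrator $\dot p_g = v_g$, $\dot v_g = u_g$ and whose only control constraint is the box $\norm{u_g}_\infty \le \bar u$. First I would form the Hamiltonian with costate $\xi = (\xi_p, \xi_v)$:
\begin{equation*}
H = \tfrac12 u_g^\tr R u_g + \tfrac12 v_g^\tr Q_r v_g - a_r \sigma(P^\star(p_e;t)) + \xi_p^\tr v_g + \xi_v^\tr u_g,
\end{equation*}
and write the costate equations $\dot\xi_p = -\partial H/\partial p_g$, $\dot\xi_v = -\partial H/\partial v_g$. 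The $v_g$-derivative immediately gives $\dot\xi_v = -\xi_p - Q_r v_g$. The $p_g$-derivative only hits the term $-a_r\sigma(P^\star(p_e;t))$, so $\dot\xi_p = a_r \sigma'(P^\star) \nabla_{p_g} P^\star(p_e;t)$; the transversality conditions at $t_f$ come from the terminal cost $\tfrac12 v_g^\tr(t_f) Q_f v_g(t_f) - a_f\sigma(P^\star(p_e;t_f))$, yielding $\xi_v(t_f) = Q_f v_g(t_f)$ and $\xi_p(t_f) = -a_f\sigma'(P^\star)\nabla_{p_g} P^\star(p_e;t_f)$.

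Next I would compute $\nabla_{p_g} P^\star(p_e;t)$ explicitly from \eqref{eq:powerstar}. Since $P^\star = 10\log_{10}P_0 + 10\log_{10}B^\star(\theta_e;p_g) + 10\log_{10}L(\norm{p_e-p_g})$, the chain rule through $\log_{10}$ produces the common factor $\tfrac{10}{\ln 10} = \gamma(t)/\sigma'(P^\star)$, and then two pieces: $\nabla_{p_g} B^\star / B^\star$ and $\nabla_{p_g} L / L$. For the path-loss piece, $L(r) = 1/(4k^2 r^2)$ with $r = \norm{p_e - p_g}$ gives $\nabla_{p_g} L = (p_e-p_g)/(2k^2\norm{p_e-p_g}^4) =: L_p(p_g)$ after differentiating $r$ with respect to $p_g$. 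For the beampattern piece, I would treat the two branches of $B^\star$ separately: on the region $|\mu| \ge \tfrac{\pi}{2kD}$, $B^\star \equiv 4$ is constant so its gradient vanishes — this is the source of the indicator $\mathds{1}_{|\mu|<\frac{\pi}{2kD}}$; on the region $|\mu| < \tfrac{\pi}{2kD}$, I differentiate $B^\star = 2 - 2\cos(2kD\sin(\tfrac{\theta_c-\theta_e}{2}))$ through the angles $\theta_c(t) = \mathrm{atan2}(y_c - y_g, x_c - x_g)$ and $\theta_e(t) = \mathrm{atan2}(y_e - y_g, x_e - x_g)$, using $\nabla_{p_g}\mathrm{atan2}(y_a - y_g, x_a - x_g) = \big(\tfrac{y_a-y_g}{\norm{p_a-p_g}^2}, -\tfrac{x_a-x_g}{\norm{p_a-p_g}^2}\big)$; collecting the $\tfrac{d}{d\theta}$ factors yields $2kD\sin(2kD\mu)\cos(\tfrac{\theta_c-\theta_e}{2})$ times the bracketed vector, i.e. exactly $B_p(p_g)$.

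Finally I would handle the control minimization. Since $H$ is a positive-definite quadratic in $u_g$ plus the linear term $\xi_v^\tr u_g$, and since $R = \mathrm{diag}(r_1,r_2)$ and the constraint box decouple coordinatewise, pointwise minimization of $\tfrac12 r_i u_{gi}^2 + \xi_{vi} u_{gi}$ over $|u_{gi}| \le \bar u$ is a one-dimensional problem: the unconstrained minimizer is $-r_i^{-1}\xi_{vi}$, and if that lies outside $[-\bar u,\bar u]$ the constrained minimizer is the nearer endpoint, namely $-\bar u\,\mathrm{sgn}(\xi_{vi})$ (the sign is correct because the quadratic is decreasing toward its vertex). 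This is precisely \eqref{eq:opt}. I expect the main obstacle to be the $\nabla_{p_g} B^\star$ computation: one must correctly differentiate the nested $\cos(2kD\sin(\cdot/2))$ composition through the \texttt{atan2} angles, keep careful track of which argument of each \texttt{atan2} depends on $p_g$ (both $\theta_c$ and $\theta_e$ do, since the array moves while client and eavesdropper are fixed), and verify that the resulting expression collapses to the stated $B_p$; the branch bookkeeping that introduces the indicator also needs a brief justification that the $|\mu|=\tfrac{\pi}{2kD}$ boundary is measure-zero and hence irrelevant to the costate ODE. Everything else — the Hamiltonian structure, the costate equations, the transversality conditions, and the decoupled box-constrained quadratic minimization — is routine once these gradients are in hand.
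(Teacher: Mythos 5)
Your proposal follows essentially the same route as the paper's proof: the same Hamiltonian, the same adjoint and transversality conditions from Pontryagin's principle, the same chain-rule computation of $\nabla_{p_g}P^\star$ through the logarithms, the path-loss and atan2-based beampattern gradients (with the indicator coming from the constant branch $B^\star\equiv 4$), and the coordinatewise box-constrained quadratic minimization giving \eqref{eq:opt}. The computations you outline are correct, so this is a faithful reconstruction of the paper's argument.
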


\begin{proof}
Define the Hamiltonian of the optimal control problem of minimizing \eqref{eq:coststar} as
\begin{equation*}
H=\frac{1}{2}u_g^\textrm{T}Ru_g+\frac{1}{2}v_g^\textrm{T}Q_rv_g-a_r\sigma(P^\star)+\xi_p^\textrm{T}v_g+\xi_v^\textrm{T}u_g,
\end{equation*}
where $\xi_p,~\xi_v\in\mathbb{R}^2$ are the costates.
Since this Hamiltonian is strictly convex in $u_g$, when $|r_i^{-1}\xi_{vi}(t)|\le \bar{u}$ for $i\in\{1,2\}$, the optimal control must satisfy the stationarity condition $\frac{\partial H}{\partial u_{gi}}=0$, which yields $u_{gi}^\star(t)=-r_i^{-1}\xi_{vi}(t)$. On the other hand, when $|r_i^{-1}\xi_{vi}(t)|>\bar{u}$, Pontryagin's principle dictates that the optimal control should minimize the Hamiltonian, which yields the projected equation $u_{gi}^\star(t)=-\bar{u}\cdot \mathrm{sgn}(\xi_{vi}(t))$. Collectively, these formulas yield
\eqref{eq:opt}.

Next, from the adjoint equations, we obtain
\begin{equation}\label{eq:adj}
\begin{split}
\dot{\xi}_p(t) &= -\frac{\partial H(t)}{\partial p_g(t)} = a_r\sigma'(P^\star(t))\frac{\partial P^\star(t)}{\partial p_g(t)},\\
\dot{\xi}_v(t) &= -\frac{\partial H(t)}{\partial v_g(t)} = -\xi_p(t)- Q_rv_g(t).
\end{split}
\end{equation}
For the partial derivative of $P^\star(t)$, we have
\begin{equation}\label{eq:dP}
\begin{split}
\frac{\partial P^\star(t)}{\partial p_g(t)}&=\frac{10}{\textrm{ln}10}\frac{1}{B^\star(\theta_e(t);~p_g(t))}\frac{\partial B^\star(\theta_e(t);~p_g(t))}{\partial p_g(t)}\\&+\frac{10}{\textrm{ln}10}\frac{1}{L(\norm{p-p_g(t)})}\frac{\partial L(\norm{p-p_g(t)})}{\partial p_g(t)}.
\end{split}
\end{equation}
Here, we have
\begin{align}\label{eq:dL}
\frac{\partial L(\norm{p-p_g(t)})}{\partial p_g(t)} &= \frac{p-p_g(t)}{2k^2\norm{p-p_g(t)}^4}.
\end{align}
Moreover, following the definitions $p_g=[x_g~y_g]^\textrm{T}$, $p_e=[x_e~y_e]^\textrm{T}$, $p_c=[x_c~y_c]^\textrm{T}$, the formulas $\theta_c(t)=\textrm{atan2}(y_c-y_g(t),x_c-x_g(t))$, $\theta_e(t)=\textrm{atan2}(y_e-y_g(t),x_e-x_g(t))$, and the chain rule, when $|\mu(t)|<\frac{\pi}{2kD}$ we obtain
\begin{align}\nonumber
\hspace{-2mm}&\frac{\partial B^\star(\theta_e(t);~p_g(t))}{\partial x_g(t)}=2kD~\textrm{sin}\left(2kD\textrm{sin}\left(\frac{\theta_c(t)-\theta_e(t)}{2}\right)\right)\\\nonumber&\cdot\textrm{cos}\left(\frac{\theta_c(t)-\theta_e(t)}{2}\right)\Bigg(\frac{y_c-y_g(t)}{(x_c-x_g(t))^2+(y_c-y_g(t))^2}\\&\qquad\qquad\qquad-\frac{y_e-y_g(t)}{(x_e-x_g(t))^2+(y_e-y_g(t))^2}\Bigg)\label{eq:dBx}
\end{align}
and
\begin{align}\nonumber
&\hspace{-2mm}\frac{\partial B^\star(\theta_e(t);~p_g(t))}{\partial y_g(t)}=2kD~\textrm{sin}\left(2kD\textrm{sin}\left(\frac{\theta_c(t)-\theta_e(t)}{2}\right)\right)\\\nonumber&\cdot\textrm{cos}\left(\frac{\theta_c(t)-\theta_e(t)}{2}\right)\Bigg(-\frac{x_c-x_g(t)}{(x_c-x_g(t))^2+(y_c-y_g(t))^2}\\&\qquad\qquad\qquad+\frac{x_e-x_g(t)}{(x_e-x_g(t))^2+(y_e-y_g(t))^2}\Bigg).\label{eq:dBy}
\end{align}
On the other hand, when $|\mu(t)|\ge\frac{\pi}{2kD}$, one has $\frac{\partial B^\star(\theta_e(t);~p_g(t))}{\partial x_g(t)}=\frac{\partial B^\star(\theta_e(t);~p_g(t))}{\partial y_g(t)}=0$. Combining this with \eqref{eq:adj}-\eqref{eq:dBy} yields the flow equations of the theorem. From the transversality conditions, we also have that $\xi_v(t_f)=Q_fv_g(t_f)$ and $\xi_p(t_f)=-a_f\gamma(t_f)\frac{\partial P(t_f)}{\partial p_g(t_f)}$. We can calculate $\frac{\partial P(t_f)}{\partial p_g(t_f)}$ following a similar analysis as in the flow equations, which yields the boundary conditions of the theorem. \frQED
\end{proof}

\begin{remark}
To ensure numerical stability, the denominators $B^\star$ and $L$ should be normalized with a small positive constant. However, this regularization is not required when $\sigma$ is chosen as a ReLU-like function as discussed before Remark \ref{re:2}, and as implemented in the following numerical experiments.
\end{remark}

\section{Numerical Experiments}

\subsection{Directional Jamming Against a Static Eavesdropper}

We consider a client, located at $p_c=[3000~3000]^\textrm{T} \textrm{m}$, who wants to receive a wireless communication signal in the $1575.42~\textrm{MHz}$ frequency band. At the same time, an eavesdropper positioned at $p_e=[6000~6000]^\textrm{T} \textrm{m}$ wants to intercept the same signal. We assume that the power of the signal is $-125~\textrm{dBm}$.

\begin{figure}[!t]
\centering
\includegraphics[width=\linewidth]{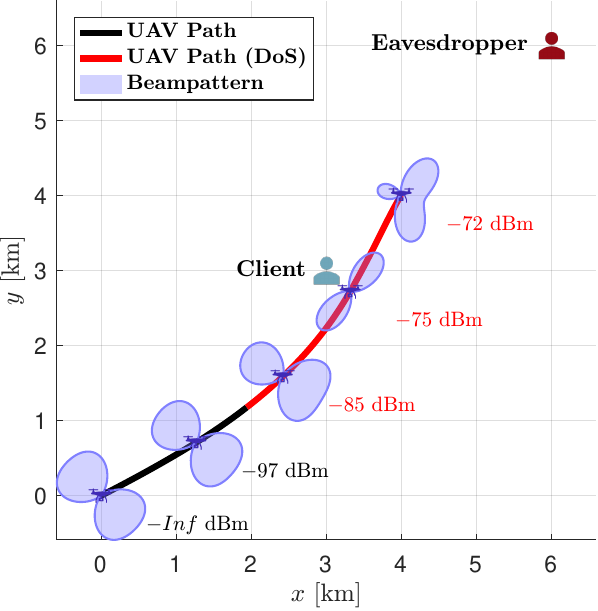}
\caption{\small The path followed by the UAV, along with snapshots of its antennas' beampattern and the corresponding jamming power attained on the eavesdropper. 
As the far-field directions to the eavesdropper and the client separate, and as the UAV gets closer to the eavesdropper, jamming power on the eavesdropper increases and eventually crosses the $-90~\textrm{dBm}$ DoS threshold.} \label{fig:path}
\end{figure}

To jam the eavesdropper and prevent it from receiving the wireless signal, we deploy a jammer UAV that carries two omnidirectional antennas separated by $D=\frac{\lambda}{2}=9.52~\textrm{cm}$ and with power $P_0=600~\textrm{mW}$ each, as shown in Figure~\ref{fig:setup}. This UAV is initially positioned at $p_g(0)=[0~0]^\textrm{T}~\textrm{m}$ and with velocity $v_g(0)=[0~0]^\textrm{T}~\textrm{m/s}$. This position, however, is suboptimal: the UAV is far from the eavesdropper, and both the client and the eavesdropper lie in the same direction in the far field. As a result, the UAV cannot effectively jam the eavesdropper without also interfering with the client. Therefore, to reposition the UAV for maximum jamming effectiveness, we solve the optimal control problem \eqref{eq:coststar} to determine its control input $u_g$, and we select the communication parameters $\phi_1,~\phi_2,~\theta_g$ according to \eqref{eq:phi2}, \eqref{eq:phi1} and \eqref{eq:thetag}. The actuation threshold for the UAV is $\bar{u}=2~\textrm{m/s}^2$.

\begin{figure}[!t]
\centering
\includegraphics[width=0.95\linewidth]{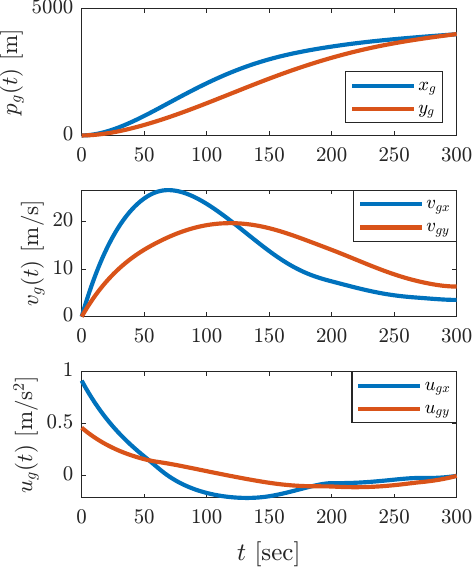}
\caption{\small The trajectories of the position $p_g(t)=[x_g(t)~y_g(t)]^\textrm{T}$, the velocity $v_g(t)=[v_{gx}(t)~v_{gy}(t)]^\textrm{T}$, and the acceleration $u_g(t)=[u_{gx}(t)~u_{gy}(t)]^\textrm{T}$ of the UAV.} \label{fig:posvel}
\end{figure}

\begin{figure}[!t]
\centering
\includegraphics[width=1\linewidth]{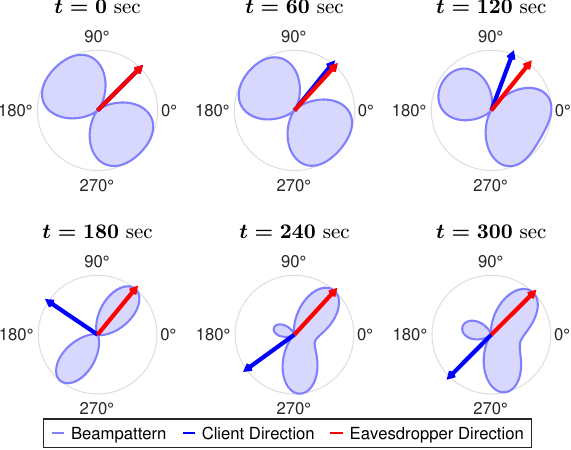}
\caption{\small The beampattern created by the UAV's antennas. As the UAV attains better angle variety between the client and the eavesdropper, the beampattern value in the direction of the eavesdropper increases. Meanwhile, the beampattern value in the client's direction remains equal to 0, i.e., no jamming power reaches the client.\vspace{-3mm}} \label{fig:beams}
\end{figure}

We choose the parameters of the optimal control problem \eqref{eq:coststar} as $t_f=300~\textrm{sec}$,   $R=\frac{200}{t_f}$, $a_r=\frac{\textrm{ln}10}{t_f}$, $Q_r=\frac{0.1}{t_f} I$ and $a_f=Q_f=0$. Moreover, since the wireless signal's strength is $-125~\textrm{dBm}$, we assume jamming interference begins becoming impactful when it crosses the $-100~\textrm{dBm}$ threshold at the receiver, and leads to denial-of-service (DoS) reliably when it crosses the $-90~\textrm{dBm}$ threshold. Given this information, we select $\sigma(\cdot)$ in \eqref{eq:coststar} as the ReLU-like function:
\begin{equation*}
\sigma(x)=\begin{cases}-100, &x<-100, \\ x,  &x\in[-100,-70], \\ -70, &x>-70, \end{cases}
\end{equation*}
and, accordingly, $\sigma'(x)=\mathds{1}_{x\in[-100,-70]}(x)$. This choice of $\sigma$ motivates the optimal $u_g^\star$ to generate trajectories that degrade or deny service at the eavesdropper. At the same time, it avoids giving excessive rewards for UAV paths that pass too close to the eavesdropper and yield jamming powers above $-70~\textrm{dBm}$ (as the DoS outcome remains unchanged). Moreover, it preserves numerical stability by zeroing the flow equations in Theorem~\ref{th:bvp} whenever either $B^\star$ or $\norm{p - p_g}$ is close to zero. Note, to improve numerical stability, we use the smooth approximation $\sigma'(x)\approx \frac{1}{1 + e^{-200(x +100)}} - \frac{1}{1 + e^{-200(x +70)}}$.

\begin{figure}[!t]
\centering
\includegraphics[width=0.95\linewidth]{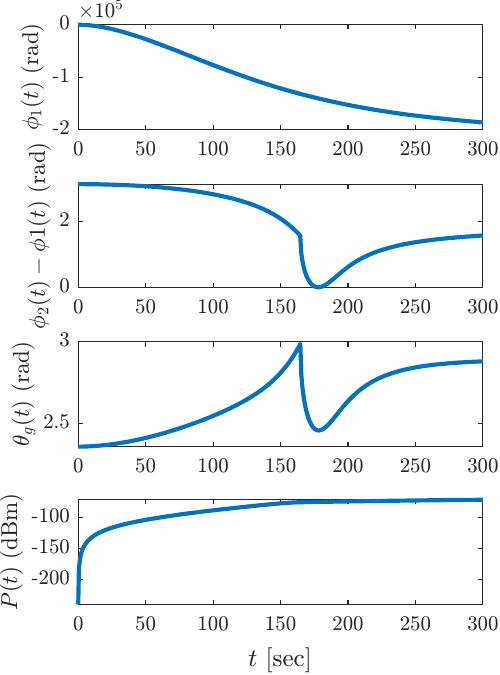}
\caption{\small The trajectories of the phases $\phi_1(t),~\phi_2(t)$ and the angle $\theta_g(t)$ of the UAV's antenna, as well as the total received jamming power $P(t)$ on the eavesdropper. \vspace{-3mm}} \label{fig:coms}
\end{figure}

We solve the optimality equations of Theorem \ref{th:bvp} using the shooting-based solver bvp4c in MATLAB. We initialize the solver with a path that drives the UAV to the midpoint between the eavesdropper and the client. 
Figure \ref{fig:path} shows the resulting optimal path of the UAV as well as the optimal beampattern of its antennas and the corresponding jamming powers attained on the eavesdropper. We notice that the UAV follows a curved path in order to attain better angle variety between the eavesdropper and the client, i.e., in order to increase the difference between $\theta_c(t)$ and $\theta_e(t)$. This angular separation increases the magnitude of $|\mu(t)|$, thereby enhancing the directional jamming gain $B^\star(\theta_e)$ toward the eavesdropper. As the UAV approaches the eavesdropper, free-space path losses also decrease, and the received jamming power surpasses the $-90~\textrm{dBm}$ threshold, leading to DoS on the eavesdropper. Figure~\ref{fig:posvel} further shows the UAV's position, velocity, and acceleration profiles, which remain physically realistic and respect the actuation constraint of $2~\textrm{m/s}^2$.

Figure~\ref{fig:beams} shows snapshots of the antenna beampatterns, with the directions to the eavesdropper and the client indicated in red and blue, respectively. Initially, these directions coincide, and the directional gain toward the eavesdropper is set to zero to avoid interfering with the client. As the UAV moves and the angular separation between the two directions increases, the beampattern is shaped to enhance the gain toward the eavesdropper, which eventually reaches its maximum value of $4$. Figure~\ref{fig:coms} depicts the evolution of the communication parameters $\phi_1(t)$, $\phi_2(t)$, and $\theta_g(t)$ that shape the beampattern, as well as the resulting jamming power $P(t)$ received by the eavesdropper. A visible nonsmooth transition occurs around $t = 160~\textrm{sec}$, corresponding to the time when the expression for $B^\star$ saturates at its maximum value of $4$.

\begin{figure}[!t]
\centering
\includegraphics[width=1\linewidth]{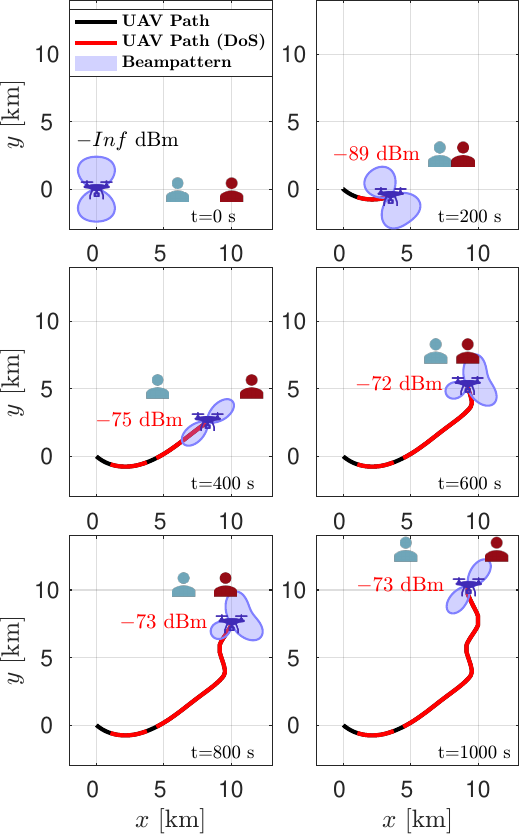}
\caption{\small The path followed by the UAV in the receding horizon jamming implementation, with the client denoted in steel blue and the eavesdropper in dark red. The UAV initially moves south to get a better angle separation between the client and the eavesdropper, which allows the beampattern \eqref{eq:Btheta} to increase. Subsequently, it moves between the client and the eavesdropper and follows them, achieving maximal angle separation while minimizing path losses. \vspace{-3mm}} \label{fig:path_move}
\end{figure}

\subsection{Receding Horizon Implementation Against a Moving Eavesdropper}

Subsequently, we apply the scheme from the previous subsection to a scenario where both the client and the eavesdropper are in motion.
We assume the client and eavesdropper are initially located at $[6000~0]^\mathrm{T}~\mathrm{m}$ and $[10000~0]^\mathrm{T}~\mathrm{m}$, respectively, and follow northbound trajectories. To account for their changing positions, the UAV replans its trajectory and beamforming every $2$ seconds,  for a total of $1000$ seconds (i.e., 500 replanning instances). Moreover, to ensure that no jamming power reaches the client due to its movement between replanning intervals, the UAV adjusts the beamforming online by recomputing the optimal antenna angle $\theta_g(t)$ from the formula \eqref{eq:thetag} and the phase from \eqref{eq:phi2}. The rest of the parameters are as previously. Note, each replanning problem took approximately $15~\textrm{ms}$ to solve.

Figure \ref{fig:path_move} shows snapshots of the resulting path of the UAV and its beampattern, along with the positions of the client and eavesdropper. We notice that the UAV initially moves to the east-southeast, despite the client and eavesdropper moving north. This is because a southeastern trajectory allows the UAV to get better angle separation between the client's and the eavesdropper's direction, hence allowing the beampattern $B^\star$ on the eavesdropper to increase. Subsequently, the UAV converges to a path between the client and the eavesdropper which maximizes angle separation -- and hence $B^\star$ on the eavesdropper -- while reducing path losses. Finally, jamming power crosses the $-100~\textrm{dBm}$ interference threshold after $42.6$ seconds, and the $-90~\textrm{dBm}$ DoS threshold after $88.6$ seconds (with a short drop to $-90.4~\textrm{dBm}$ at $t=220~\textrm{sec}$).

\section{Conclusion}

We consider the coordinated beamforming–control problem for a UAV tasked with jamming towards an eavesdropper while leaving a friendly client unaffected. The UAV carries two omnidirectional antennas and can control their phases, orientation, and its own position. We derive a closed-form expression for the antenna phases that enables nulling at the client, and determine the antennas’ orientation and the UAV trajectory that maximizes jamming at the eavesdropper by solving an optimal control problem.

Future work includes extending the proposed jamming scheme to a game-theoretic framework that anticipates a mobile eavesdropper’s worst-case future trajectory.

\balance

\bibliographystyle{ieeetr}      
\bibliography{references.bib}

\end{document}